\documentclass[12pt]{article}
\setlength{\oddsidemargin}{27mm}
\setlength{\evensidemargin}{27mm}
\setlength{\hoffset}{-1in}

\setlength{\topmargin}{27mm}
\setlength{\voffset}{-1in}
\setlength{\headheight}{0pt}
\setlength{\headsep}{0pt}

\setlength{\textheight}{235mm}
\setlength{\textwidth}{155mm}

\pagestyle{plain}

\usepackage{amsfonts}
\usepackage{algpseudocode}
\usepackage{algorithm}
\usepackage{float}
\usepackage{graphicx}
\usepackage{amssymb}
\usepackage{amsmath}
\usepackage[
backend=biber,
style=alphabetic,
sorting=ynt
]{biblatex}
\addbibresource{biblio.bib}
\usepackage[rightcaption]{sidecap}
\usepackage{amsthm}

\numberwithin{equation}{section}
\theoremstyle{plain}
\newtheorem{theorem}{Theorem}[section]
\newtheorem{lemma}[theorem]{Lemma}

\theoremstyle{definition}

\begin{document}
\baselineskip 13pt

\begin{center}
\textbf{\Large Pulsar Consensus} \\
\bigbreak
\noindent
\textbf{\small Version 0.1} \\

\vspace{1.5cc}
{ \sc Samer Afach, Ben Marsh, and Enrico Rubboli}\\

\vspace{0.3 cm}

{\small Mintlayer}
 \end{center}
\vspace{0.3 cm}

\begin{abstract}
  \noindent  In this paper, we informally introduce the Pulsar proof of stake consensus paper and discuss the relevant design decisions and considerations. The Pulsar protocol we propose is designed to facilitate the creation of a proof of stake sidechain for a proof of work blockchain. We present an overview of a novel composable density-based chain selection rule for proof of stake systems which can be seen as a superset of some standard existing longest chain rules for proof of stake protocols. We discuss the Pulsar protocol in comparison to existing proof of stake protocols and define its benefits over existing designs while defining the limitations of the work. Pulsar is currently implemented in the Mintlayer proof of stake Bitcoin sidechain.
\end{abstract}

\bigbreak
\noindent

\newpage
\tableofcontents
\newpage

\section{Introduction}
As cryptocurrencies, which were first introduced in their current form by Bitcoin \cite{btc} and more generically blockchains, mature, there is a need for evolution in their underlying protocols to meet the demand for robust, secure, scalable, and more energy-efficient blockchains. Constrained by the CAP \cite{cap} theorem, there are a number of trade-offs that must be considered when designing a blockchain and its consensus protocol.
\bigbreak
\noindent
By design, consensus protocols are inherently resistant to Sybil attacks, thereby safeguarding trust and integrity within the network. This is important to prevent the potential manipulation of the network by any single entity seeking control. A consensus protocol, such as Bitcoin, must be a fault tolerant system, such as those found in Byzantine Fault Tolerant protocols, characterized by its capability to operate accurately, even in scenarios where a subset of nodes becomes faulty, inaccessible, or intentionally malicious. We assume protocols such as Bitcoin are BFT hereafter. 
\bigbreak
\noindent
Proof of stake consensus protocols have far more complicated security models than traditional proof of work based protocols. The reliance on work in proof of work systems is bypassed in proof of stake systems allowing for the cheap creation of malicious blocks which can be used to attack the honest chain. The Proof of Stake protocols require a meticulous assessment of potential attack vectors such as nothing-at-stake attacks, long-range attacks, and stake grinding.
\bigbreak
\noindent
Moreover, the current development of consensus protocols has ushered in the era of dynamic availability - a concept that allows a network to maintain consensus, even when the number of participating nodes changes dynamically. This differs significantly from the traditional model of multiparty computation, where the number of participants is known and fixed beforehand. Dynamic availability offers superior flexibility and adaptability, allowing blockchain networks to continue operating reliably in the face of constant change.
\bigbreak
\noindent
In this work, we use the term sidechain \cite{sidechain} to mean a separate blockchain with a direct link through a bridge or atomic swap system allowing the chains to be used in parallel. Typically one of the chains is dominant, although we reject the idea of a master-slave relationship between the chains, and the less dominant chain, which acts for the benefit of the mainchain either by enabling new features or enhancing existing ones, is the sidechain.
\bigbreak
\noindent
Pulsar, describes a proof of stake consensus protocol designed to be used as a sidechain for a proof of work layer one blockchain but is usable in isolation as the sole consensus protocol in a network. The consensus protocol is designed to be formally proven and guarantee composability where relevant and possible. We assume a similar security model to a partially-synchronous Byzantine Fault Tolerant protocol with PKI. Formal discussions of the performance and security of Pulsar are out of the scope of this document and will follow. 
\section{Previous Work}
Proof of stake consensus protocols were first introduced by Peercoin \cite{pc, ppc} in 2012 and, strictly speaking, created a hybrid cryptocurrency that used both proof of work and proof of stake within the protocol. As a proof of stake protocol, Peercoin requires staking, and this is done by coins being unspent and left in place; after a 30-day period, the UTXOs are considered matured and stake-able. Peercoin helpfully aggregates matured UTXOs for the purpose of the consensus protocol. The use of coin age, defined as the length of time the UTXO has existed multiplied by the amount of time the UTXO has existed \textit{$coins \cdot time$}, in Peercoin has led to a handful of issues, including peculiarly disincentivizing the liveliness of an agent by disincentivizing staking and allowing for a lower barrier of entry to 51\% attacks where a much lower percentage than 51\% of stake is needed to conduct such an attack. In some Peercoin implementations, an upper bound of 90 days has been used in an attempt to mitigate some of these issues. Despite this, the use of matured UTXOs is still considered contentious at best and has been routinely avoided by later proof of stake protocols. A staker in Peercoin not only earns a block reward when they produce a block at a rate proportional to their coin age, in a similar way to modern proof of stake systems having block production rates proportional to their relative stake but also earns a small extra reward, sometimes viewed as interest on top leading to a situation where the reward for a block is proportional to the coin age. Peercoin does not punish nodes \cite{bentov} as one might be familiar with in more modern proof of stake systems; for example, nothing-at-stake attacks are primarily handled via the coin age concept; there have also been attempts to mitigate such attacks by removing the tip if the same coins produced several blocks, although this has not been without controversy.
\bigbreak
\noindent
The lack of punishment for misbehaving nodes has been controversial in some quarters, and the Ethereum foundation's Beacon Chain, previously and still sometimes known as Ethereum 2.0 \cite{eth} amongst other things, introduces slashing in their proof of stake upgrade; in essence, a node which acts against what the networks believe is valid behavior then the agent will lose some of their stake. In some ways, the idea of punishing for misbehavior can be thought of as an attempt to equalize the risk profiles of proof of work mining and proof of stake staking, as in a proof of work system, a malicious agent will spend money on electricity when attempting to attack the network, and slashing creates a similar utility for the agent. This is meant to incentivize good behavior. The slashing behavior in Ethereum's Beacon Chain means that Ethereum's Beacon Chain does not behave in the same way as Bitcoin when it comes to dynamic availability, and an agent is punished for lack of liveliness. In this sense, Ethereum is closer to more traditional distributed multiparty computation protocols.
\bigbreak
\noindent
Peercoin, unlike most modern proof of stake protocols such as Algorand \cite{algorand}, BABE \cite{polkadot}, and Cardano's Ouroboros \cite{cardano}, does not use a verifiable random function for randomness, when selecting the block producers, when required on-chain and in a previous implementation used the signature of the \textit{$N-1^{th}$} block as the source of randomness for the \textit{$N^{th}$} block. This made it possible to attack the block producers through a denial of service attack. The use of a VRF, as defined in \ref{vrf}, allows for block producers to be selected without a public list of block producers being available. 
\bigbreak
\noindent
In some proof of stake implementations, stake grinding issues have existed, allowing a block producer to increase the chance of them producing a later block. In some cases, these have been due to the use of something malleable to the block producer being used as a source of randomness, used for selecting the next block producer, such as a block hash or the signature of a block. For example, in Peercoin, parameters were left in blocks that were hashed, allowing the block producer to impact the proof of stake hash. In essence, they could grind parameter options in the block until they found the most favorable hash output to them. Some protocols have used slashing to penalize agents accused to stake grinding, and lockup periods of stakes can make stake grinding attacks more expensive, reducing the utility of the attack if not preventing it entirely. 
\bigbreak
\noindent
Finality can be a contentious issue in some circles, with some believing Bitcoin's lack of formal finality via statistical finality is the holy grail; since Pulsar is designed to be a sidechain for a proof of work chain such as Bitcoin, this is an important consideration. Most proof-of-stake consensus protocols incorporate finality in some form in order to reduce the risk of long-range attacks, but each has its own model for doing so. In Algorand, finality is done on a block-by-block basis meaning there is no chance to fork or reorganize the blockchain. Polkadot, with its many parachains, has a thin line to walk between being able to reorg and not reorganizing too deep as to cause the parachains pain, and because of the limitation of storing candidates in memory, limiting the length of candidate branches. It achieves this via its GRANDPA \cite{grandpa} finality gadget, which is capable of finalizing blocks en-masse. The limited length of candidate branches has the potential to cause issues in a situation where the consensus is proposed on a sidechain which must reorganize alongside a mainchain which may not have the same limitations. 
\bigbreak
\noindent
In proof of work systems, chain selection is a relatively simple and well-defined process, based on the assumption the cryptographic primitives they rely on, such as SHA-256 for Bitcoin, remain secure, which consists of selecting the chain with the most work; however formal barriers \cite{posbarriers} exist to the same idea in proof of stake systems. As a result, it is necessary for proof of stake systems to take a different approach. In Ouroboros, this consists of a chain selection rule which is, in essence, a pair of rules, once for recent blocks and once for blocks beyond a finality threshold. Ouroboros looks at the density of the chain, meaning the chain with the fewest empty slots, within a given time period to determine the canonical chain; selecting the denser chain makes it much more challenging to perform an attack that uses a fork, such as building on a private chain to be released in the future.
\bigbreak
\noindent
Although proof of work protocols have been formally analyzed and formal models \cite{miller} for them exist, it wasn't until Ouroboros that a similar approach had been taken with proof of stake systems. Since then, several other proof of stake protocols have undergone a formal security analysis, but a single unifying model does not exist due to different protocols' disparate aims. As a result, the work does not conform to a standard approach for analyzing a proof of stake system but draws on work from several existing protocols. 

\section{Contribution}
Pulsar builds upon the work of Ouroboros and BABE and introduces a consensus protocol that is not only formally verifiable like Ouroboros but suitable to be used as a sidechain to a proof of work blockchain, most notably Bitcoin. Pulsar does not currently punish agents for liveliness or adverse behavior, just as in a proof of stake system like Peercoin, although it is likely in a future version that an incentive for an agent to be online will be introduced. Pulsar introduces finality after a depth of 1000 blocks in order to ensure compatibility with the Bitcoin mainchain, which it is designed to work alongside; as such, Pulsar has a hybrid finality model with deterministic finality of 1000 blocks and probabilistic finality before. 
\bigbreak
\noindent
Most notably, Pulsar introduces a new composable chain selection rule where a single rule can be used to select the canonical chain from the entire history, not just until an arbitrary point in history, assuming the chain includes the required checkpoints. This chain selection rule is discussed in more detail in \ref{cs} but can be seen as a superset of Cardano's Ouroboros chain selection rule and more traditional approaches, such as the Peercoin approach. It is designed to be tune-able to allow the blockchain implementing Pulsar to get their desired behavior. Pulsar is designed with flexibility in mind to be tune-able such that it is able to be reused as a sidechain for non-Bitcoin sidechains or as a standalone blockchain. The consensus protocol has been implemented as part of the Mintlayer \cite{mintlayer} blockchain.

\section{Network and Security discussion}
\subsection{Cryptography}
\subsubsection{Hash function}
The implementation of Pulsar relies on the existence of a cryptographically secure hash function  \textit{$h: A \to B$} defined over a message alphabet \textit{$A = {a \in \{0,1\}^{i}} : i \in \mathbb{N}$} producing a digest \textit{$B = \{0,1\}^{k}$}. The hash function should have the properties of collision resistance, pre-image resistance, and an avalanche effect. BLAKE2 \cite{blake} provides a viable option, although other alternatives exist.
\subsubsection{Digital signature}
Pulsar requires a digital signature scheme for the signing of blocks and transactions within the network. Informally a signature scheme requires a public and private key pair \textit{$pk$} and \textit{$sk$} and relies on a function \textit{$F_{s}$} to sign a message, \textit{$m$}, \textit{$sig = F_{s}(sk, m)$} and a function \textit{$F_{v}$} to verify a signature using the public key \textit{$ver = F_{v}(pk, m ,sig)$}. An aggregable signature scheme such as BLS or Schnorr is preferred since it allows for more efficient protocols to be designed in a resource-constrained environment such as a blockchain. In our implementation, we use Schnorr \cite{schnorr} over BLS for the simplicity of reusing existing BIPs from Bitcoin despite the fact BLS offers more efficient aggregation. Notably, we use the curve secp256k1 as in Bitcoin due to known issues using Ed25519 with certain BIPs \cite{ed25519}.
\subsubsection{KES}
Although not essential, there is a significant benefit to the use of a key evolving signature scheme \cite{kes} for the digital signatures used to sign blocks in Pulsar. A KES is used to create a chain of signing keys preventing the ability to reuse old keys if appropriately used, as it allows for forward secrecy and back-dating of signatures. It is possible to get similar behavior in the system by ratcheting keys after a defined period of time, be that a single use or a number of days.
\subsubsection{VRF}\label{vrf}
A verifiable random function \cite{vrf} is required to select agents on-chain randomly. Using on-chain data allows the selection to be deterministic, opening the opportunity for denial-of-service attacks against known selected agents; removing the possibility for these attacks is the primary benefit of a VRF. In some circumstances, a stake-grinding-esque attack may be possible where the source of randomness can be influenced by the agents themselves, and a VRF can also prevent circumstances like these from arising, no matter how rare they are. A VRF is a cryptographic primitive that allows for the creation of a pseudo-random output and the creation of a proof of that output. To define this slightly more formally while still avoiding getting into the weeds, we can think of an EC VRF for an agent, \textit{$a$}, as consisting of a public and private key pair called \textit{$pk_{a}$} and \textit{$sk_{a}$} respectively, along with functions \textit{$f_{r}$} and \textit{$f_{v}$} where \textit{$f_{r}$} creates a pseudo-random output and a commitment based proof and \textit{$f_{v}$} verifies the pseudo-random output of \textit{$f_{r}$} along with the proof and \textit{$pk_{a}$}. 

\subsection{Networking}
\subsubsection{Liveliness and dynamic availability}
The network is assumed to be dynamically available where a node or pool running agent can enter and exit the network at will. However, an agent joining the network is not able to earn a block reward immediately as part of the security model of the network. Agents within the network are assumed to be lively and are not currently punished for lack of liveliness, although they will be so in the future.
\subsubsection{Network latency}
The network latency is bounded by some \textit{$\Delta$} such that all network participants receive a block, \textit{$B$}, by time \textit{$t + \Delta$}. Given the existence of \textit{$\Delta$} and the use of a verifiable random function, there is a probability \textit{$\epsilon_{\mathrm{cons}}$} that at a height \textit{$h$}, there are competing blocks \textit{$B$} and \textit{$B'$} such that \textit{$B \neq B'$}.

\section{Consensus}

\subsection{Blocks}
Pulsar blocks are not radically different from blocks in other blockchains, even those in Bitcoin; a block is produced by a block producer, which in Pulsar is a pool selected as the leader for a given slot. Once a block has been built, it is sent out to the rest of the network, and under normal conditions, the block will become the latest block in the canonical chain. In our implementation, a block is produced every 2 minutes, on average, with a maximum size of 1MB.
\subsubsection{Block header}
Blocks in Pulsar contain a header that includes a set of fields common to all blocks. In a proof of work system, the block header is sufficient enough to allow for header first syncing via the consensus data, but in a proof of work protocol like Pulsar, the consensus data is not alone sufficient. 
\bigbreak
\noindent
A Pulsar block header includes: a signature; consensus data such as the VRF data, pool ID, and target; the previous block ID; the timestamp; the Merkle root; and the witness Merkle root. The VRF data allows a node to validate that the pool was indeed a slot leader when the block was produced and thus was allowed to create a block. The signature is necessary to ensure the pool claimed to be responsible for the block is responsible for the block and was allowed to do so and ensures that the block cannot be tampered with later. Being able to prove the producer of a block is also essential if the network intends to punish malicious actors, which is an open area of future work.
\subsubsection{Validation}
A Pulsar block timestamp has a hard requirement to come later, in UNIX time, then the block upon which it is attempting to build. That is \textit{$T_{n} > T_{n-1}$} for any \textit{$n$} where \textit{$T_{n}$} is the timestamp of a block at height \textit{$n$}. This is required to ensure strict ordering of the blocks within the network. For the rest of the block, the validation steps are obvious, but for the sake of completeness, we will list the important ones here. There are a variety of steps that must be taken in order for a block to be validated; notably, both the header and block body have to both be checked. The process of checking the header of a block consists of checking the block parent, ensuring the maximum reorganization depth is obeyed, and ensuring that the consensus data is valid. Further checks also take place in the form of checking the size of the block, ensuring that the block reward in the block is valid, validating the Merkle tree roots, and checking the transactions; ultimately, the block is used to update the local state of the node. Errors from invalid blocks can be handled in different ways and is arguably an implementation detail that can be used to shape the behavior of the system which implements Pulsar.
\subsubsection{Block body}
The body of a block in Pulsar is not worth discussing in any acute detail, so we will just mention that it contains the transactions included in the block and the block reward, if there is one.

\subsection{Staking}
\subsubsection{Staking} \label{staking}
As a proof of stake consensus protocol, block producers are selected from agents who have staked coins in the network. More specifically, in Pulsar, a block producer is a pool who is a slot leader for a given slot.
\subsubsection{Epochs}
As with most proof of stake protocols, Pulsar breaks time into slots and epochs. In Pulsar, an epoch is 5 days in length, 3600 blocks, and is used to distribute rewards to the agents in the network at a defined interval as well as update the randomness for that epoch. Key evolving signature schemes can be tied to epochs to enforce long-range attack mitigation, although these are not enforced by the protocol. An epoch is a set of ordered second long slots, \textit{$e_{i} = sl^{i}_{1}, sl^{i}_{2}...sl^{i}_{n}$} where each slot may or may not have a block produced. Epochs themselves are also ordered and strictly non-overlapping \textit{$e_{1}, e_{2}, ..., e_{n}$}.
\subsubsection{Pools and delegation}
A pool must consist of no less than some staked threshold to exist, and this will be referred to as the pledge hereafter; the purpose of the pledge is to incentivize good behavior by enforcing a pool operator having a financial interest in the pool. In future versions of Pulsar, it is possible that punishments such as slashing will apply to a pool's pledge when the pool behaves in a provably malicious fashion. 
\bigbreak
\noindent
In our implementation, we set the minimum pledge at \textit{$0.01\%$} of the total supply. The pool operator is responsible for maintaining the node, including ensuring liveliness as well as honest behavior. A pool owner does not take control of coins staked in their pool at any time, and they remain under the control of the delegator.
\bigbreak
\noindent
Any coin holder in the network may choose to delegate their coins to any pool in order to become an active participant in the network and earn block rewards, and help secure the network. The pool has a total stake that is equal to the sum of the pledge and the sum of all the pool's delegators' stakes {\[ p_{s} = p_{p} + \sum_{d \in [D_{p}]} d \]}A pool has a relative stake which is the fraction of the total stake of the network held by that pool. {\[\phi = \frac{p_{s}}{\sum_{p \in [P]}p}\]}The probability of a pool with a relative stake \textit{$\phi$} producing a given block can be seen as \textit{$Pr[p] \propto \phi$}.
\bigbreak
\noindent
Given this, a pool can predict its expected utility in the network as \textit{$\phi \cdot r$} where \textit{r} is a quasistatic variable representing the block reward. The value of \textit{$r$} is not only influenced by the emission curve which defines the reward for a given block but the rewards formula which exists to incentivize larger pledges for pools, with a pledge \textit{$p_{1}$} twice the size of \textit{$p_{2}$} having \textit{$0.5\%$} extra power, if the minimum required pledge is \textit{$p_{\mathrm{min}}$} then pledging \textit{$2 \cdot p_{\mathrm{min}}$} will result in a single, non-linear \textit{$0.5\%$} bonus stake probability.
\bigbreak
\noindent
A delegator can find their expected utility by taking into account their relative stake within the pool \textit{$\frac{d}{p_{s}} \cdot \phi \cdot r - f_{p}$}.
\bigbreak
\noindent
In order to limit the strength of a given pool the total size of a pool is limited to \textit{$\frac{1}{1000}$} of the total supply after which it saturates. That is to say that the increased probability of producing a block approaches an asymptote as \textit{$p_{s} \to \tau_{\mathrm{sat}}$}.

\subsubsection{Pool pledge incentives}
In order to increase decentralization and incentivize pool owners to pledge, pools with a higher pledge receive a slightly higher expected return to a bound defined to encourage decentralization. That is to say pools saturate when they reach \textit{$\frac{1}{k}$} of the total final supply. In Mintlayer the value chosen is \textit{$k = 1000$}. The total supply is used to increase the simplicity of strategy finding which could be computationally expensive with dynamic reward incentives.
\[ \frac{\sigma + a \cdot (\frac{\sigma - s \cdot (\frac{z - \sigma}{z})}{z})}{a + 1} \to \sigma - \frac{a}{a + 1} \cdot (\frac{\sigma \cdot z^{2} - s \cdot (z \cdot \sigma - s \cdot (z - \sigma))}{z^{2}})\]
We should discuss several parameters now starting with the obvious \textit{$z = \frac{1}{k}$} which represents the size of a saturated pool. \textit{$a = \frac{75375728}{1000000000} = 0.07537578$} which represents a \textit{$0.5\%$} incentive increase for the first doubling of the pledge amount, that is to say a pledge of \textit{$80,000$} has a \textit{$0.5\%$} benefit over 2 pools of \textit{$40,000$}. This parameter can be tuned according to the benefit desired. \textit{$s$} represents the pledge of a given pool and \textit{$\sigma$} the total stake of that pool.

\subsubsection{Accounting}
Pulsar utilizes an account-based accounting system to handle staking irrespective of the underlying blockchain; by this, we mean that a UTXO-based blockchain such as Mintlayer still uses an account abstraction to simplify staking. An accounting abstraction allows for delegation to be implemented in a way that the pool owner has the ability to prove rights over those coins without ever taking ownership of them. The pool abstraction also limits the number of UTXOs being paid from block rewards, and you could, for example, imagine a case where entire blocks are filled with paying delegators. In contrast, through the accounting system, these UTXO-based transactions are only required when a delegator wishes to take direct ownership of their rewards. The accounting system has the ability to create pools, decommission pools, increase the pool owner's pledge, delegate, and un-stake a delegation. A delegator who unstakes or an agent who decommissions a pool has an enforced cool-down period before the funds are spendable. 

\subsection{Slot leader selection}
Any extant pool is eligible to make a block but to do so they must be considered a leader for a given slot. Recall that a slot leader is determined randomly with a probability proportional to the pool's relative stake as explained in \ref{staking}.
\subsubsection{VRF}
A slot leader is chosen at random for a given slot, \textit{sl}, if the pool's VRF output is below a threshold, \textit{$O^{vrf}_{sl} < \tau$}. Since the probability of producing a block should be proportional to a pool's relative stake \textit{$\phi$}, the threshold is multiplied by the pool's total stake \textit{$\tau \cdot p_{s}$} before the VRF output is compared to it. It is worth noting that since this selection is probabilistic and random, it is both possible and valid to have multiple slot leaders chosen for any given slot in time. This situation will be rectified by the selection of the longest chain after further blocks are built upon one of the competing blocks.
\subsubsection{Difficulty adjustment}
In order to keep the block production time as close to constant as possible, \textit{$\tau$} changes dynamically after every block produced, looking at the average block times of the previous \textit{100} blocks. The threshold swing is limited to \textit{$\frac{1}{1000} \cdot \tau_{\mathrm{prev}}$} every block acting as a low-pass filter to prevent large oscillations.
\begin{algorithm}
\caption{Threshold adjustment}\label{alg:cap}
\begin{algorithmic}
\Require $\tau_{\mathrm{prev}} \geq 0$
\State $d = \frac{\tau_{\mathrm{prev}}}{1000}$
\State $\tau_{\mathrm{new}} = \tau_{prev} \cdot \frac{t_{\mathrm{actual}}}{t_{\mathrm{target}}}$
\If{$\tau_{\mathrm{new}} < \tau_{\mathrm{prev}} - d$}
\State $\tau_{\mathrm{new}} = \tau_{\mathrm{prev}} - d$
\ElsIf{$\tau_{\mathrm{new}} > \tau_{\mathrm{prev}} + d$}
\State $\tau_{\mathrm{new}} = \tau_{\mathrm{prev}} + d$
\EndIf
\end{algorithmic}
\end{algorithm}
\subsection{Finality}
\subsubsection{Reorganization}
Pulsar is designed to work as a sidechain to a PoW-based blockchain such as Bitcoin, and as such, Pulsar does not support block-by-block finality as found in PoS-based chains like Algorand as it must be able to reorganize alongside Bitcoin if necessary. Since Bitcoin relies on statistical finality where, it can, technically, if not socially, reorg to an arbitrary depth, given the constraints of being able to finalize blocks in Pulsar and supporting arbitrary reorganize a quasi-arbitrary depth of \textit{1000} blocks has been chosen as the maximum depth of a reorg in Pulsar as a reorganization longer than approximately \textit{33} hours would be socially unacceptable. As a result, a Mintlayer block and the transactions it contains can be considered finalized after \textit{1000} blocks.
\subsubsection{Checkpointing}
Checkpointing is an essential part of the Pulsar protocol as it provides a defense against long-range attacks \cite{possidechain}. In essence, checkpointing is used to immutably enforce a canonical chain in the history before the checkpoint by hardcoding a specific blockhash into the protocol. That is that a chain that doesn't agree on the blocks earlier than the checkpoint cannot become the canonical chain, as the chain cannot diverge from that hardcoded point. 
\subsection{Chain selection}\label{cs}
In order to select the canonical chain Pulsar uses a single composable chain selection rule designed to choose a chain by its density. The rule will strictly select the chain with the densest history in the latest set of blocks because finalized blocks must be included. If we define history as a string \textit{$\{0,1\}^{n}$} then a \textit{0} can be seen as an empty slot and a \textit{1} as a slot with a valid block. The intention of the Pulsar chain selection rule is to pick the denser chain from these. Doing this directly by purely counting blocks is impossible with a single rule, as it requires looking at multiple blocks at once rather than a single block.
\bigbreak
\noindent
The longest chain rule can, in some ways, be seen as an extension of the longest chain rule found in Cardano's Ouroboros. In Ouroboros, the coefficients of the "slot string" are mapped to 1 and -1, meaning that negative outputs are valid, which is not true in the Pulsar longest chain rule. More formally we could say the codomains of the longest chain rules can be thought of as \textit{$f_{p}: \mathbb{Z}^{+} \mapsto [0, \infty) \subset \mathbb{R}$} and \textit{$f_{o}: \mathbb{Z}^{+} \mapsto \mathbb{Z}$}. The key benefit here being that the chain trust in Pulsar will never be negative irrespective of the number of unfilled slots. 
\bigbreak
\noindent
As a result, Pulsar's chain selection rule works on the basis of summing a chain trust value of valid blocks in slots and removing a unit of trust for empty slots. As a result, the total maximum trust for a given chain can be calculated since it is just \textit{$block\_score \cdot height$} in order to ensure that no matter the number of empty slots to be removed from the chain trust score is less than the trust of a single filled slot the below function is used since it converges to \textit{$1$} at \textit{$\infty$} given \textit{$\lim_{t \to \infty} f(t) = 1$}. 
\bigbreak
\noindent
{\[ f(t) = \alpha \int_{0}^{t} e^{-\alpha x}dx \]}for \textit{$\alpha \in [0,\infty)$}.The value of \textit{$\alpha$} is tune-able in order to get the preferred behavior of the chain selection rule, given it represents the speed of convergence. 
\bigbreak
\noindent
The tune-able nature of \textit{$\alpha$} allows the chain selection rule presented here to be seen as a superset of some common preexisting chain selection rules. With \textit{$\alpha = 0$}, the chain selection rule will essentially prefer the chain with the most blocks acting as a traditional longest chain rule as is the Peercoin approach. When \textit{$\alpha \to \infty$}, the chain selection rule will act as Cardano's Ouroboros chain selection rule. 
\bigbreak
\noindent
\begin{SCfigure}[0.5][ht]
\caption{Two competing chains without checkpoints}
\includegraphics[width=0.6\textwidth]{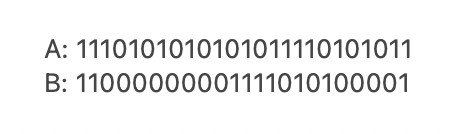}
\end{SCfigure}
\bigbreak
\noindent
In figure 1 two competing chains can be seen. Chain A will be selected by the longest chain rule in Pulsar given it is the denser chain.
\bigbreak
\noindent
\begin{SCfigure}[0.5][ht]
\caption{Two competing chains with checkpoints}
\includegraphics[width=0.6\textwidth]{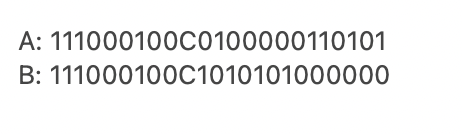}
\end{SCfigure}
\bigbreak
\noindent
In figure 2 competing chains with a checkpoint can be seen. Both chains obey the checkpoint and only the later chain is considered. A will win again as it has the denser chain. But a block in an empty slot on B could mean B is now the longest chain.
\bigbreak
\noindent
\begin{SCfigure}[0.5][ht]
\caption{Two competing chains with an ignored checkpoint}
\includegraphics[width=0.6\textwidth]{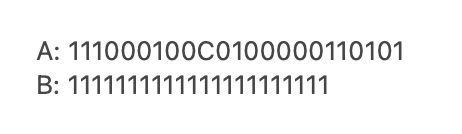}
\end{SCfigure}
\newpage
\noindent
In figure 3 a checkpoint exists but chain B ignores the checkpoint so despite the fact it has the denser chain it will not be selected. 
\bigbreak
\noindent
Although a formal proof for the chain selection rule in the partially-synchronous model is to follow it is worth briefly discussing several characteristics. First is that he function $W_b(t) = e^{-\alpha \cdot t}$ is monotonically decreasing for all $t \geq 0$, where $\alpha > 0$. Let's briefly prove the lemma.
\bigbreak
\noindent
Consider two values $t_1$ and $t_2$ where $0 \leq t_1 < t_2$. 
\\Since $\alpha > 0$, the exponential function $e^{-\alpha \cdot t}$ decreases as $t$ increases. 
\\Therefore, $e^{-\alpha \cdot t_1} > e^{-\alpha \cdot t_2}$. 
\\Additionally, the derivative of $W_b(t)$ with respect to $t$ is $\frac{dW_b}{dt} = -\alpha e^{-\alpha \cdot t}$. 
\\Since this derivative is negative for all $t \geq 0$, it confirms that $W_b(t)$ is decreasing at every point in its domain. 
\\Therefore, $W_b(t_1) = e^{-\alpha \cdot t_1} \geq e^{-\alpha \cdot t_2} = W_b(t_2)$, satisfying the criterion for monotonicity. 
\\Hence, we conclude that the function $W_b(t)$ is monotonically decreasing for all $t \geq 0$.
\[\forall t \geq 0, \frac{dW_{b}}{dt} < 0\]
\bigbreak
\noindent
We will next examine a simple game theoretic model of the chain selection rule, to attempt to find a nash equilibrium, assuming \textit{$\Delta = 0$} and a constant static block production time. A more complete analysis is to follow in a subsequent work.
\\
Since the honest network will try and build \textit{$b+1$} on \textit{$b$} the utility for that portion of the network can be simply computed as the probability of them producing a block. As such the utility is
{\[U_{h} = \phi_{h} \cdot r\]}
The malicious network is slightly more complicated as we must look at two situations, namely when the malicious network is able to produce a single block and when it is able to produce two blocks and replace the canonical chain, before the honest network produces a block.
{\[U_{m}^{1} = 0\]} since the block will not replace \textit{$b$} as it now has \textit{$2t$} for its empty slots. 
{\[U_{m}^{2} = \textit{$\phi_{m}^{2} \cdot 2r$}\]} since the malicious network must produce 2 blocks on their parallel chain to become the canonical chain and earn rewards at all. 
{\[U_{m}^{h} = \textit{$\phi_{m} \cdot r$}\]}if the malicious agent chooses to act honestly then their utility mirrors that of the honest network.
\bigbreak
\noindent
$\therefore $ if \textit{$\phi_{m} < \frac{1}{2}$} the utility is higher to build on the canonical chain.
\bigbreak
\noindent
This can be generalised to \textit{$\phi^{n} \cdot n \cdot r$} for \textit{$\{n \in \mathbb{Z} | 1 \leq n < h_{f}\}$}
For there to be a NE we need the condition {\[ u_{m}^{h} \geq u_{m}^{2}\]} to hold. That is that a malicious miner has an expected utility that is higher by acting honestly when they have the ability to produce a block than when they try and produce a competing chain. We assume that an honest miner will always act honestly.
\bigbreak
\noindent
$$1 - \phi_{h} \geq 2(1 - \phi_{h})^{2}$$
$$1 - \phi_{h} \geq 2(1 - 2\phi_{h} + \phi_{h}^{2})$$
$$0 \geq 2\phi_{h}^{2} - 3\phi_{h} + 1$$
$$0 \geq (\phi_{h} - 1)(2\phi_{h} - 1)$$
$$\frac{1}{2}\leq\phi_{h}\leq1$$
This implies a NE when \textit{$\phi_{h} \geq 0.5$} meaning that a malicious miner should always act honestly unless the total stake held by the malicious miner or their cartel is more than \textit{$0.5+\epsilon$}.
\section{Future work}
There are a number of active research areas that may form part of version 2 of Pulsar to improve upon the work outlined here. Slashing or a similar incentive system may be introduced to ensure correct network behavior; this could be by slashing the pool pledge or by modifying the reward paid to the pool when a block reward is earned. Slashing has several negatives which must be considered, not least of which is the complexity of correctly implementing slashing in a way that doesn't penalize people incorrectly. There is also the risk of reducing network participation, reducing decentralization, so directly slashing may not be ideal behavior. Alongside slashing the process of voting or co-signing blocks is an area of active research hence the desire for an aggregable signature scheme.
\bigbreak
\noindent
As mentioned before, checkpointing is an essential security feature in a proof of stake blockchain. Since Pulsar is designed to be a sidechain-focused consensus protocol, it could be possible to use the other chain to help checkpoint the network. This could be by checkpointing Pulsar blocks onto the other chain, say Bitcoin, or by linking Pulsar blocks to Bitcoin block hashes and ensuring linear ordering of Bitcoin block heights within Pulsar blocks.
\subsection{Need for punishments}
\label{sec:attack-model}
We model time in discrete slots $t=1,2,\dots$.  In each slot, a designated \emph{leader} $L_t$ is selected at random from the set of pools $P$, proportional to stake.  A valid \emph{block} in slot~$t$ is any header $B_t=(h_t,\sigma_t)$ where:
\begin{itemize}
  \item $h_t$ is the previous block hash,
  \item $\sigma_t=\mathrm{Sign}_{sk_{L_t}}(\mathrm{VRFSeed}\|h_t)$ is a signature under the leader's key,
  \item the VRF output $\mathrm{VRF}(sk_{L_t},h_t)$ falls below threshold $\tau\cdot p_{L_t}$.
\end{itemize}
An honest leader produces exactly one $B_t$ per slot.  An \emph{equivocating} leader may produce multiple distinct blocks $B_t^{(1)},\dots,B_t^{(k)}$ all satisfying validity.
\begin{lemma}
If leaders may equivocate without penalty, then the total number of valid forks grows exponentially in the number of slots.
\end{lemma}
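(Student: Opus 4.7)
The plan is to show a multiplicative lower bound on the number of valid chain tips per slot, and then iterate. Let $F_t$ denote the number of distinct valid chain tips after slot $t$, with $F_0=1$. I will argue that whenever the slot-$t$ leader equivocates, $F_t \geq 2\,F_{t-1}$; standard induction then gives $F_T \geq 2^{|E_T|}$ where $E_T\subseteq\{1,\dots,T\}$ is the set of equivocating slots, which is exponential in $T$ as soon as $|E_T|=\Omega(T)$.

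\textbf{Key step: multiplicative growth per equivocating slot.} Fix any existing valid tip with previous-block hash $h$ at the end of slot $t-1$. If $L_t$ is the slot-$t$ leader and passes the VRF threshold on input $h$, then the validity predicate places no restriction that would prevent $L_t$ from signing two distinct headers $B_t^{(1)},B_t^{(2)}$ both of the form $(h,\sigma_t^{(i)})$. Each is individually valid, and they extend $h$ into two distinct tips. The crucial observation is that the VRF check $\mathrm{VRF}(sk_{L_t},h)<\tau\cdot p_{L_t}$ depends on the parent hash alone, not on the signature or the block body, so the leader only needs to clear the threshold once per parent in order to fork it arbitrarily many times. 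Iterating this over every parent tip shows $F_t\geq 2F_{t-1}$ at every equivocating slot, while at non-equivocating slots we still have $F_t\geq F_{t-1}$ since existing tips remain valid tips of their respective chains whether or not they are extended.

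\textbf{Iteration and rate.} Telescoping the recurrence over $t=1,\dots,T$ yields $F_T\geq 2^{|E_T|}$. In the worst case, when every slot has an equivocating adversarial leader, $F_T\geq 2^T$. Even when only a $\beta$-fraction of slots are controlled by equivocators (the typical stake-bounded adversary model), $F_T\geq 2^{\beta T}$, which is still exponential.

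\textbf{Main obstacle.} The delicate point is ensuring that the two headers $B_t^{(1)},B_t^{(2)}$ extending the same parent are genuinely distinct valid objects under the rules stated in Section~\ref{sec:attack-model}. Under the simplified header $(h_t,\sigma_t)$, distinctness requires that the signature scheme admits more than one valid signature on a fixed message; Schnorr signatures do (via the choice of nonce), but a deterministic signing rule would force one to fork via distinct parents instead, which requires a richer argument using the existing tips. In the full Pulsar block format this issue disappears, since the leader can vary the timestamp or transaction set; I would therefore briefly justify distinctness in the minimal model, then remark that it is automatic in practice. A secondary subtlety worth flagging is that this lemma counts validity only, not canonicality: nothing in the statement invokes the chain-selection rule or the $1000$-block reorg bound, so the count is the syntactic upper bound of what any selection algorithm could ever be presented with.
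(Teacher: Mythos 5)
Your proof is correct and takes essentially the same route as the paper's: a multiplicative recurrence on the number of valid tips per equivocating slot, iterated to give exponential growth (the paper uses a factor of $k$ blocks per leader where you use $2$, which is immaterial). Your added discussion of why two distinct valid headers on the same parent actually exist (signature non-determinism versus the full block format) addresses a gap the paper's own proof silently skips over, and is worth keeping.
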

\begin{proof}
Let $f_t$ denote the number of competing heads at slot $t$.  Base: $f_0=1$.  At slot~$t$, an equivocating leader can extend each of the $f_{t-1}$ heads with up to $k$ blocks, yielding
\[
f_t \ge f_{t-1} \cdot k
\]
By induction, $f_t \ge k^t$, so forks blow up exponentially.
\end{proof}
\begin{lemma}
If the next slot's leader colludes, then flooding $k$ variants with embedded transactions yields an expected MEV revenue proportional to $k$.
\end{lemma}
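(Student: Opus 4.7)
The plan is to treat each of the $k$ variant blocks $B_t^{(1)},\dots,B_t^{(k)}$ as an independent MEV ``lottery ticket'' and then add up their contributions. First I would let $X_i \geq 0$ denote the MEV revenue embedded in variant $B_t^{(i)}$, and model the $X_i$ as i.i.d.\ with common mean $\mathbb{E}[X_i]=\mu>0$ drawn from the distribution of extractable value in the current mempool. Validity of every variant is free: by the equivocation hypothesis each $B_t^{(i)}$ satisfies the header, VRF, and signature checks, so the preceding lemma guarantees that the consensus layer cannot a priori reject any of them.

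Second, I would formalize the claim that a colluding $L_{t+1}$ can monetize every variant rather than only the one it eventually extends. The cleanest way is to assume an out-of-band side-payment model: before committing to an extension, the colluder collects from the attacker a bribe $b_i$ tied to the publication of $B_t^{(i)}$, with $\mathbb{E}[b_i]=\beta\mu$ for some constant $\beta\in(0,1]$ reflecting the split of MEV between $L_t$ and $L_{t+1}$. Because the side payments are triggered by publication, not by canonicalization, the total payoff decomposes additively as $R=\sum_{i=1}^{k} b_i$, and linearity of expectation yields
\[
\mathbb{E}[R]\;=\;\sum_{i=1}^{k}\mathbb{E}[b_i]\;=\;k\,\beta\,\mu,
\]
which is proportional to $k$ as claimed. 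A short remark would contrast this with the honest baseline, where $L_t$ commits to a single variant and realizes expected MEV at most $\mu$, exposing the factor-$k$ gap that motivates punishments.

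The hard step is the monetization argument in the second paragraph: one has to rule out the degenerate case in which the $k$ payloads target the \emph{same} on-chain opportunity (e.g.\ overlapping sandwich attacks against a single victim transaction). In that case only the canonical variant executes and the payoff collapses to $\mathbb{E}[\max_i X_i]$, which is sub-linear in $k$ for light-tailed distributions. To bypass this, I would restrict the statement to payloads targeting pairwise disjoint MEV opportunities (distinct pools, distinct external venues, or distinct victim transactions), so that the $b_i$ are genuinely additive; the general case can be handled by the bribe-based reduction above, since off-chain payments are insensitive to on-chain exclusivity. Under either restriction the linear lower bound $\mathbb{E}[R]=\Omega(k\mu)$ is immediate, completing the proof and motivating the subsequent discussion of slashing as the natural countermeasure.
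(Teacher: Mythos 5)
Your proof is sound on its own terms but takes a genuinely different route from the paper. The paper's proof has the colluding next leader \emph{select a single best fork} and asserts that ``under mild regularity, the maximum of $k$ i.i.d.\ revenue draws grows as $\Theta(k)$''; linearity there comes from an extreme-value claim about $\mathbb{E}[\max_i X_i]$. You instead obtain linearity \emph{additively}: every published variant is monetized (via per-publication bribes or pairwise disjoint MEV opportunities), so $\mathbb{E}[R]=\sum_i \mathbb{E}[b_i]=k\beta\mu$ by linearity of expectation. The two arguments rest on different modelling assumptions, and yours is arguably the more defensible one: as you yourself note in your final paragraph, $\mathbb{E}[\max_i X_i]$ is sub-linear in $k$ for light-tailed distributions (bounded for uniform, $\Theta(\log k)$ for exponential), so the paper's ``mild regularity'' is in fact a strong heavy-tail assumption that your additive decomposition avoids. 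The trade-off is that your argument needs an extra structural hypothesis the paper does not state --- either off-chain side payments triggered by publication rather than canonicalization, or disjointness of the $k$ embedded opportunities --- whereas the paper stays within the on-chain, single-winner model at the cost of a quantitatively shaky step. Since the lemma is only motivational (it exists to justify punishments), either reading serves the purpose, but you should be aware that your proof establishes a different mechanism for the $\Theta(k)$ scaling than the one the authors had in mind.
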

\begin{proof}
Given $k$ equally valid forks containing distinct profitable bundles, a colluding next leader selects the one with highest oracle‐measured reward.  Under mild regularity, the maximum of $k$ i.i.d. revenue draws grows as $\Theta(k)$, so profit scales linearly.
\end{proof}
\begin{lemma}
If the slot‑$t$ key $sk_t$ is destroyed immediately after signing, no pool can produce more than one valid signature in slot~$t$, and any further blocks signed with the same key in that slot are rejected.
\end{lemma}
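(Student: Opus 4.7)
The plan is to reduce the statement to the standard existential-unforgeability (EUF-CMA) property of the underlying signature scheme, combined with the erasure assumption on $sk_t$. First I would unpack the validation rule: from the attack model, a valid block in slot $t$ must carry a signature $\sigma_t = \mathrm{Sign}_{sk_{L_t}}(\mathrm{VRFSeed}\|h_t)$ verifiable under the slot-$t$ public key $pk_t$ that validators derive deterministically from the KES commitment (as described in the KES subsection of Section~5.1). Thus every additional "block in slot $t$" attributed to pool $L_t$ must include a fresh string that verifies under $pk_t$.

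Next I would invoke the erasure hypothesis. By assumption, immediately after emitting its first signature in slot $t$ the leader's local state no longer contains $sk_t$; the KES evolution has advanced the secret to $sk_{t+1}$, from which $sk_t$ cannot be recovered (this is the forward-secrecy property that motivates using KES in the first place). Consequently, any further signature the pool attempts to produce in slot $t$ is generated without knowledge of $sk_t$, and the task of producing such a signature is exactly the EUF-CMA game against the signature scheme with one adaptive query already made. By the security assumption on the scheme (Section~5.1.2), the pool succeeds with at most negligible probability.

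I would then close the argument on the validator side: the block-header validation step verifies $\sigma_t$ against the published $pk_t$; any purported second block whose signature fails this check is rejected, and by the previous paragraph all such additional attempts fail with overwhelming probability. Hence with overwhelming probability at most one valid signature per slot per pool exists, and any further header signed with the nominal slot-$t$ key is rejected by honest validators.

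The main obstacle I anticipate is being precise about the gap between the "no pool can produce" formulation in the statement, which reads as absolute, and the computational-security reality, where an adversary succeeds with negligible but nonzero probability; this is standard but needs to be acknowledged, either by restating the lemma modulo a negligible failure event or by assuming an idealized signature oracle. A secondary subtlety is ensuring the erasure model is strong enough to cover rewind/replay attacks on the pool's own machine, which requires the KES update to be atomic with signing rather than a separate step.
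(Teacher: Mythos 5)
Your proposal is correct and follows essentially the same route as the paper's own (very terse) proof: the key is ratcheted/erased after the first signature, so no further slot-$t$ signature can be produced, and validators reject anything that fails verification for that slot. You are more careful than the paper in two useful ways --- reducing the ``cannot sign again'' claim to EUF-CMA forgery so the conclusion holds only up to negligible probability, and flagging that signing and key erasure must be atomic to rule out a pool pre-signing several messages before destroying $sk_t$ --- both of which the paper's proof silently assumes.
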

\begin{proof}
By construction of a KES: upon signing message $m$, the key $sk_t$ is ratcheted to a fresh key $sk'_t$, rendering $sk_t$ unusable thereafter. Consequently, only one signature $\sigma_t=F_s(sk_t,m)$ can be generated per slot. Honest nodes also reject any blocks bearing a duplicated signature for slot~$t$, fully preventing equivocation.
\end{proof}
\begin{lemma}
Under an on‑chain slashing rule that forfeits a leader’s entire stake and all future block‑reward and MEV entitlements upon double‑signing, any attempt to equivocate is strictly unprofitable.
\end{lemma}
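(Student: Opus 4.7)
The plan is to compare the expected net payoff of an equivocating leader against that of one who follows the protocol, and exhibit a strict inequality in favour of honest play. I will treat the slashing event as publicly verifiable: two distinct blocks $B_t^{(i)}\neq B_t^{(j)}$ signed under $sk_{L_t}$ in the same slot $t$ are together a short, self-contained proof of guilt that any honest node can evaluate, so once an attacker publishes $k\ge 2$ variants the forfeiture fires with probability $1$. This lets me dispense with any delicate probability-of-detection argument and work directly with certain payoffs.

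First I would bound the upside. By the preceding MEV lemma, the extra revenue a colluding next leader can extract from $k$ variants is at most $c\cdot k$ for some constant $c$ depending on the transaction-fee and MEV distribution, and the equivocating leader can in addition hope for at most one block reward $r$ from whichever variant ends up canonical. Hence the gross expected payoff of equivocation is at most $U_{\mathrm{eq}}(k)\le r+c\cdot k$. Next I would bound the downside: by hypothesis slashing forfeits the leader's entire stake $p_s\ge p_{\min}$ together with the discounted present value $V_{\mathrm{fut}}$ of all future block-reward and MEV entitlements. The expected net payoff of equivocation is therefore bounded above by $r+c\cdot k-p_s-V_{\mathrm{fut}}$, whereas honest play yields at least $\phi_{L_t}\cdot r+V_{\mathrm{fut}}>0$. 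Strict unprofitability reduces to the calibration inequality
\[
p_s+2V_{\mathrm{fut}}\;>\;c\cdot k+(1-\phi_{L_t})\cdot r,
\]
which I would then verify using the protocol's minimum pledge $p_{\min}=10^{-4}$ of total supply and the infinite future-reward horizon implied by dynamic availability.

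The main obstacle will be controlling $c\cdot k$ without assuming a specific MEV tail. The cleanest resolution is to invoke the KES lemma proved immediately above: with key-evolving signatures the leader can produce at most one signature per slot, so $k\le 1$, the MEV term collapses to zero, and the inequality is immediate because $p_s+V_{\mathrm{fut}}>0$. Without KES one must instead argue that the stake and the entire discounted future stream dominate a one-shot, bounded MEV draw; here the saturation cap $\tau_{\mathrm{sat}}$ and minimum pledge give a quantitative lower bound on $p_s+V_{\mathrm{fut}}$, but a fully rigorous statement requires an explicit model of the MEV distribution, which I would flag as an assumption rather than attempt to derive in place.
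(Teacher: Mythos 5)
Your overall strategy---compare the one-shot gain from equivocation against the forfeited stake plus discounted future entitlements---is the same as the paper's, but you do not actually close the argument, and the two escape routes you propose both fail. The paper's proof caps the attacker's upside at a single quantity $R_{\max}$, the maximum MEV revenue extractable in one slot, and then observes that the forfeited future stream $U_{\text{future}}=\sum_{u=t+1}^{\infty}\beta^{\,u-(t+1)}(r+R_{\max})=\frac{r+R_{\max}}{1-\beta}$ \emph{itself contains $R_{\max}$ in every future term}, so it strictly dominates the one-shot $R_{\max}$ for any $0<\beta<1$ with no calibration against the pledge size, the saturation cap, or the MEV tail needed at all. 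That is the idea you are missing: the lemma's hypothesis that \emph{all future MEV entitlements} are forfeited means the very quantity bounding the gain reappears, summed over infinitely many discounted future slots, inside the loss. Your version instead bounds the gain by $r+c\cdot k$ with $k$ unbounded, which leaves you needing the ``calibration inequality'' $p_s+2V_{\text{fut}}>c\cdot k+(1-\phi_{L_t})r$ that you admit you cannot verify without an explicit MEV model---so as written the proof has a hole exactly where the conclusion should be.

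Your two proposed fixes do not repair it. Invoking the KES lemma to force $k\le 1$ proves a different statement: it makes equivocation \emph{impossible}, rendering the slashing hypothesis vacuous, whereas this lemma is specifically about making equivocation \emph{unprofitable} for a leader who is physically able to double-sign. And deferring to ``an explicit model of the MEV distribution'' concedes the gap rather than filling it. To repair your argument in the spirit of the paper, define the per-slot gain as $R_{\max}=\max_t(\text{MEV revenue in slot }t)$ over all strategies including flooding $k$ variants (i.e., fold $c\cdot k$ into $R_{\max}$ by assuming per-slot extractable value is finite), and then note $S_{\text{slash}}\ge U_{\text{future}}>R_{\max}$. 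One further remark: your comparison against the honest payoff (yielding the $2V_{\text{fut}}$ term) is a legitimate and arguably more careful game-theoretic framing than the paper's, which only shows the net change from equivocating is negative; but since the honest payoff is nonnegative, the paper's weaker computation already implies yours once the gain is properly capped.
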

\begin{proof}
Let:
\[
  R_{\max}
  = \max_{\text{slot }t}\bigl(\text{MEV revenue in slot }t\bigr),
  \quad
  U_{\text{future}}
  = \sum_{u=t+1}^{\infty}\beta^{\,u-(t+1)}\bigl(r + R_{\max}\bigr),
\]
where \(r\) is the honest block reward per slot, and \(0<\beta<1\) is the leader’s discount factor.  If a leader equivocates in slot \(t\), they lose:
\[
  S_{\text{slash}}
  = \text{current stake} 
  \;+\;
  U_{\text{future}},
\]
but can gain at most \(R_{\max}\) in that slot.  Their net utility change is
\[
  U_{eq}
  = R_{\max} - S_{\text{slash}}
  < 0,
\]
since \(U_{\text{future}}\) dominates any single‑slot \(R_{\max}\).  Therefore, a rational, stake‑maximizing leader will never equivocate.
\end{proof}

\printbibliography
\end{document}